\setlist{itemsep=0pt}
\definecolor{mymagenta}{rgb}{0.5,0,0.5}
\definecolor{mygreen}{rgb}{0,0.4,0}
\definecolor{myblue}{rgb}{0,0,0.6}
\definecolor{myred}{rgb}{0.4,0,0}
\definecolor{hlcolor}{rgb}{1,0.95,0}
\newcommand{\mkkeyword}[1]{\text{\texttt{\color{myblue}#1}}}
\newcommand{\mkconstant}[1]{\text{\texttt{\color{mygreen}#1}}}
\newcommand{\mktype}[1]{\text{\texttt{\color{myblue}#1}}}
\newcommand{\ie}{\emph{i.e.}}
\newcommand{\eqdef}{\stackrel{\text{\tiny def}}{=}}
\newcommand{\emphdef}[1]{\hl{\emph{#1}}}
\newcommand{\co}[1]{\overline{#1}}
\newcommand{\subst}[2]{\{#1/#2\}}
\newcommand{\set}[1]{\{#1\}}
\newcommand{\rulename}[1]{\text{\small[\textsc{#1}]}}
\newcommand{\eoe}{\hfill{$\blacksquare$}}
\newcommand{\Channel}{\mathsf{c}}
\newcommand{\ChannelL}{\mathtt{\color{myred}l}}
\newcommand{\ChannelR}{\mathtt{\color{myred}r}}
\newcommand{\cc}{\Channel}
\newcommand{\lc}{\ChannelL}
\newcommand{\rc}{\ChannelR}
\newcommand{\Selector}{\ell}
\newcommand{\selectL}{\mkkeyword{inl}}
\newcommand{\selectR}{\mkkeyword{inr}}
\newcommand{\Expr}{\mathsf{e}}
\newcommand{\var}{\varX}
\newcommand{\varX}{x}
\newcommand{\vartX}{x}
\newcommand{\varY}{y}
\newcommand{\vartY}{y}
\newcommand{\varZ}{z}
\newcommand{\Unit}{\mkconstant{()}}
\newcommand{\True}{\mkconstant{true}}
\newcommand{\False}{\mkconstant{false}}
\newcommand{\Value}{\mathsf{v}}
\newcommand{\x}{\varX}
\newcommand{\true}{\True}
\newcommand{\false}{\False}
\newcommand{\Context}{\mathcal{C}}
\newcommand{\ContextS}{\mathcal{E}}
\newcommand{\hole}{[~]}
\newcommand{\Process}{\ProcessP}
\newcommand{\ProcessP}{{P}}
\newcommand{\ProcessQ}{{Q}}
\newcommand{\ProcessR}{{R}}
\newcommand{\Adapter}{\AdapterA}
\newcommand{\AdapterA}{A}
\newcommand{\AdapterB}{B}
\newcommand{\nullp}{\mathbf{0}}
\newcommand{\inp}[3]{#1\text{\texttt{?}}(#2:#3)}
\newcommand{\outp}[2]{#1\text{\texttt{!}}\langle#2\rangle}
\newcommand{\selectp}[2]{#1 \triangleleft #2}
\newcommand{\casep}[3]{#1 \triangleright \{#2, #3\} }
\newcommand{\condp}[3]{\mkkeyword{if}~#1~\mkkeyword{then}~#2~\mkkeyword{else}~#3}
\newcommand{\pc}{\mathbin{{\rfloor}\hspace{-0.5ex}{\lceil}}}
\newcommand{\procout}[3]{\outp{#1}{#2}.#3}
\newcommand{\cond}[3]{\condp{#1}{#2}{#3}}
\newcommand{\PP}{\ProcessP}
\newcommand{\Q}{\ProcessQ}
\newcommand{\Sort}{\Type}
\newcommand{\Type}{\TypeT}
\newcommand{\TypeT}{t}
\newcommand{\TypeS}{s}
\newcommand{\tunit}{\mktype{unit}}
\newcommand{\tbool}{\mktype{bool}}
\newcommand{\tint}{\mktype{int}}
\renewcommand{\S}{\Sort}
\newcommand{\bool}{\tbool}
\renewcommand{\int}{\tint}
\newcommand{\unit}{\tunit}
\newcommand{\SessionType}{\SessionTypeT}
\newcommand{\SessionTypeT}{T}
\newcommand{\SessionTypeS}{S}
\newcommand{\End}{\mathtt{\color{myblue}end}}
\newcommand{\In}[1]{\text{\texttt{?}}#1}
\newcommand{\Out}[1]{\text{\texttt{!}}#1}
\newcommand{\T}{\SessionType}
\newcommand{\eend}{\End}
\newcommand{\tsel}[2]{#1\oplus #2}
\newcommand{\red}{\longrightarrow}
\newcommand{\ared}{\rightsquigarrow}
\newcommand{\aredLR}{\leftrightsquigarrow}
\newcommand{\nred}{\arrownot\red}
\newcommand{\eval}[2]{#1 \downarrow #2}
\newcommand{\eqp}{\approx}
\newcommand{\isom}{\cong}
\newcommand{\VEnv}{\upGamma}
\newcommand{\wte}[3]{#1 \vdash #2 : #3}
\newcommand{\wtpx}[6]{
  #1
  \vdash
  #2 \blacktriangleright \set{ #3 : #4, #5 : #6}
}
\newcommand{\wtp}[4]{\wtpx{#1}{#2}{\Channel}{#3}{\co\Channel}{#4}}
\newcommand{\wtplr}[4]{\wtpx{#1}{#2}{\ChannelL}{#3}{\ChannelR}{#4}}
\theoremstyle{definition}
\newtheorem{theorem}{Theorem}
\newtheorem{definition}{Definition}
\newtheorem{example}{Example}
\newcommand{\id}[1]{\mathsf{id}_{#1}}
\newcommand{\derlr}[4]{#1\vdash#2\blacktriangleright\set{\lc:#3,\rc:#4}}
\newcommand{\procL}[2]{\selectp #1\selectL.#2 }
\newcommand{\procR}[2]{\selectp #1 \selectR.#2}
\newcommand{\procbr}[3]{\casep {#1} {#2} {#3} }
\newcommand{\idA}[2]{\mathsf{id}_{#1}}
\title{Session Type Isomorphisms}
\author{
  Mariangiola Dezani-Ciancaglini\institute{
  Universit\`a di Torino, Italy}  \and
  Luca Padovani\institute{
  Universit\`a di Torino, Italy}
  \and
  Jovanka Pantovic\institute{
 Univerzitet u Novom Sadu, Serbia
 }
}
\begin{document}
\maketitle

\begin{abstract}
There has been a considerable amount of work on retrieving functions
in function libraries using their type as search key. The availability
of rich component specifications, in the form of behavioral types,
enables similar queries where one can search a component library using
the behavioral type of a component as the search key.  Just like for
function libraries, however, component libraries will contain
components whose type differs from the searched one in the order of
messages or in the position of the branching points. Thus, it makes
sense to also look for those components whose type is different from,
but isomorphic to, the searched one.

In this article we give semantic and axiomatic characterizations of
isomorphic session types. The theory of session type isomorphisms
turns out to be subtle. In part this is due to the fact that it relies
on a non-standard notion of equivalence between processes. In
addition, we do not know whether the axiomatization is complete. It is
known that the isomorphisms for arrow, product and sum types are not
finitely axiomatisable, but it is not clear yet whether this negative
results holds also for the family of types we consider in this work.
\end{abstract}

\section{Introduction}
\label{sec:introduction}

We have all experienced, possibly during a travel abroad, using an ATM
that behaves differently from the ones we are familiar with. Although
the information requested for accomplishing a transaction is
essentially always the same -- the PIN, the amount of money we want to
withdraw, whether or not we want a receipt -- we may be prompted to
enter such information in an unexpected order, or we may be asked to
dismiss sudden popup windows containing informative messages --
``charges may apply'' -- or commercials. Subconsciously, we
\emph{adapt} our behavior so that it matches the one of the ATM we are
operating, and we can usually complete the transaction provided that
the expected and actual behaviors are \emph{sufficiently similar}. An
analogous problem arises during software development or execution,
when we need a component that exhibits some desired behavior while the
components we have at hand exhibit similar, but not exactly equal,
behaviors which could nonetheless be adapted to the one we want.  In
this article, we explore one particular way of realizing such
adaptation in the context of binary sessions, where the behavior of
components is specified as session types.

There are two key notions to be made precise in the previous
paragraph: first of all, we must clarify what it means for two
behaviors to be ``similar'' to the point that one can be adapted into
the other; second, as for the ``subconscious'' nature of adaptation,
we translate this into the ability to synthesize the adapter
automatically -- {\ie} without human intervention -- just by looking
at the differences between the required and actual behaviors of the
component.  Clearly we have to find a trade-off: the coarser the
similarity notion is the better, for this means widening the range of
components we can use; at the same time, it is reasonable to expect
that the more two components differ, the harder it gets to
automatically synthesize a sensible adapter between them.
The methodology we propose in this work is based on the theory of
\emph{type isomorphisms}~\cite{DiCosmo95}. Intuitively, two types
$\SessionTypeT$ and $\SessionTypeS$ are isomorphic if there exist two
adapters $\AdapterA : \SessionTypeT \to \SessionTypeS$ and $\AdapterB
: \SessionTypeS \to \SessionTypeT$ such that $\AdapterA$ transforms a
component of type (or, that behaves like) $\SessionTypeT$ into one of
type $\SessionTypeS$, and $\AdapterB$ does just the opposite. It is
required that these transformations must not \emph{lose any
  information}. This can be expressed saying that if we compose
$\AdapterA$ and $\AdapterB$ in any order they annihilate each other,
that is we obtain adapters $\AdapterA \pc \AdapterB : \SessionTypeT
\to \SessionTypeT$ and $\AdapterB \pc \AdapterA : \SessionTypeS \to
\SessionTypeS$ that are equivalent to the ``identity'' trasformations
on $\SessionTypeT$ and $\SessionTypeS$ respectively.

In the following we formalize these concepts:
we define syntax and semantics of processes as well as a notion of
process equivalence~(Section~\ref{sec:processes}).
Next, we introduce a type system for processes, the notion of session
type isomorphism, and show off samples of the transformations we can
capture in this framework (Section~\ref{sec:types}).
We conclude with a quick survery of related works and open
problems~(Section~\ref{sec:conclusions}).


\section{Processes}
\label{sec:processes}

We let $m$, $n$, $\dots$ range over integer numbers;
we let $\Channel$ range over the set $\set{\ChannelL, \ChannelR}$ of
\emphdef{channels} and $\Selector$ range over the set $\set{\selectL,
  \selectR}$ of \emphdef{selectors}. We define an involution
$\co{\,\cdot\,}$ over channels such that $\co\ChannelL = \ChannelR$.
We assume a set of \emphdef{basic values} $\Value$, $\dots$ and
\emphdef{basic types} $\TypeT$, $\TypeS$, $\dots$ that include the
unitary value $\Unit$ of type $\tunit$, the booleans $\True$ and
$\False$ of type $\tbool$, and the integer numbers of type $\tint$.
We write $\Value \in \Type$ meaning that $\Value$ has type $\Type$.
We use a countable set of \emphdef{variables} $\varX$, $\varY$, \dots;
\emphdef{expressions} $\Expr$, $\dots$ are either variables or values
or the equality $\Expr_1 = \Expr_2$ between two expressions.
Additional expression forms can be added without substantial issues.
\emphdef{Processes} are defined by the grammar
\[
\ProcessP ~~::=~~
\nullp
~~\mid~~ \inp\Channel\var\Type.\ProcessP
~~\mid~~ \outp\Channel\Expr.\ProcessP
~~\mid~~ \selectp\Channel\Selector.\ProcessP
~~\mid~~ \casep\Channel\ProcessP\ProcessQ
~~\mid~~ \condp\Expr\ProcessP\ProcessQ
~~\mid~~ \ProcessP \pc \ProcessQ
\]
which includes the terminated process $\nullp$, input
$\inp\Channel\var\Type.\Process$ and output
$\outp\Channel\Expr.\Process$ processes, as well as labeled-driven
selection $\selectp\Channel\Selector.\Process$ and branching
$\casep\Channel\ProcessP\ProcessQ$, the conditional process
$\condp\Expr\ProcessP\ProcessQ$, and parallel composition $\ProcessP
\pc \ProcessQ$.
The peculiarity of the calculus is that communication occurs only
between adjacent processes.  Such communication model is exemplified
by the diagram below which depicts the composition $\ProcessP \pc
\ProcessQ$.  Each process sends and receives messages through the 
channels $\ChannelL$ and~$\ChannelR$.
\begin{wrapfigure}[6]{r}{0.45\linewidth}
\vskip-1ex
\begin{tikzpicture}[semithick]
  \node (P) at (4,0) [cylinder, draw, minimum height=6em, shape aspect=.5] {$\mathstrut$};
  \node (Q) at (7,0) [cylinder, draw, minimum height=6em, shape aspect=.5] {$\mathstrut$};

  \draw[<-] ($(P.bottom) + (0,1ex)$) -- ++(-2em,0);
  \draw[->] ($(P.bottom) - (0,1ex)$) -- ++(-2em,0);

  \node at ($(P.center) + (0,2em)$) {$\ProcessP$};
  \node at ($(Q.center) + (0,2em)$) {$\ProcessQ$};
  \node at (5.5,2em) {$\pc$};
  \node at ($(P.before bottom) + (0,-1em)$) {$\ChannelL$};
  \node at ($(P.after top) + (0,-1em)$) {$\ChannelR$};
  \node at ($(Q.before bottom) + (0,-1em)$) {$\ChannelL$};
  \node at ($(Q.after top) + (0,-1em)$) {$\ChannelR$};

  \draw[->] ($(P.top) + (-1ex,1ex)$) -- ($(Q.bottom) + (0,1ex)$);
  \draw[<-] ($(P.top) + (-1ex,-1ex)$) -- ($(Q.bottom) + (0,-1ex)$);

  \draw[->] ($(Q.top) + (-1ex,1ex)$) -- ++(2em,0);
  \draw[<-] ($(Q.top) + (-1ex,-1ex)$) -- ++(2em,0);
\end{tikzpicture}
\end{wrapfigure}
Messages sent by $\ProcessP$ on
$\ChannelR$ are received by $\ProcessQ$ from $\ChannelL$, and messages
sent by $\ProcessQ$ on $\ChannelL$ are received by $\ProcessP$ from
$\ChannelR$.  Therefore, unlike more conventional parallel composition
operators, $\pc$ is associative but not symmetric in general.
Intuitively, $\ProcessP \pc \ProcessQ$ models a binary session where
$\ProcessP$ and $\ProcessQ$ are the processes accessing the two
endpoints of the session. By compositionality, we can also represent
more complex scenarios like $\ProcessP \pc \AdapterA \pc \ProcessQ$
where the interaction of the same two processes $\ProcessP$ and
$\ProcessQ$ is mediated by an adapter $\AdapterA$ that filters and/or
transforms the messages exchanged between $\ProcessP$ and
$\ProcessQ$. In turn, $\AdapterA$ may be the parallel composition of
several simpler adapters.

The operational semantics of processes is formalized as a reduction
relation closed by reduction contexts and a structural congruence
relation.
\emphdef{Reduction contexts} $\Context$ are defined by the grammar
\[
\Context ~~::=~~ \hole ~~\mid~~ \Context \pc \Process ~~\mid~~ \Process \pc \Context
\]
and, as usual, we write $\Context[\Process]$ for the process obtained
by replacing the hole in $\Context$ with $\Process$.

\begin{table}
\caption{\label{tab:reduction} Reduction relation.\strut}
\framebox[\textwidth]{
\begin{math}
\displaystyle
\begin{array}{c}
\inferrule[\rulename{r-comm 1}]{
  \eval\Expr\Value
  \\
  \Value \in \Type
}{
  \outp\ChannelR\Expr.\ProcessP
  \pc
  \inp\ChannelL\var\Type.\ProcessQ
  \red
  \ProcessP
  \pc
  \ProcessQ\subst\Value\var
}
\qquad
\inferrule[\rulename{r-comm 2}]{
  \eval\Expr\Value
  \\
  \Value \in \Type
}{
  \inp\ChannelR\var\Type.\ProcessP
  \pc
  \outp\ChannelL\Expr.\ProcessQ
  \red
  \ProcessP\subst\Value\var
  \pc
  \ProcessQ
}
\\\\
\inferrule[\rulename{r-choice 1}]{}{
  \selectp\ChannelR\Selector.\ProcessP
  \pc
  \casep\ChannelL{\ProcessQ_\selectL}{\ProcessQ_\selectR}
  \red
  \ProcessP \pc \ProcessQ_\Selector
}
\qquad
\inferrule[\rulename{r-choice 2}]{}{
  \casep\ChannelR{\ProcessP_\selectL}{\ProcessP_\selectR} \pc \selectp\ChannelL\Selector.\ProcessQ
  \red
  \ProcessP_\Selector \pc \ProcessQ
}
\\\\
\inferrule[\rulename{r-cond}]{
  \eval\Expr\Value
  \\
  \Value\in\tbool
}{
  \condp\Expr{\ProcessP_\True}{\ProcessP_\False} \red \ProcessP_\Value
}
\qquad
\inferrule[\rulename{r-context}]{
  \ProcessP \red \ProcessQ
}{
  \Context[\ProcessP] \red \Context[\ProcessQ]
}
\qquad
\inferrule[\rulename{r-struct}]{
  \ProcessP \equiv \ProcessP'
  \\
  \ProcessP' \red \ProcessQ'
  \\
  \ProcessQ' \equiv \ProcessQ
}{
  \ProcessP \red \ProcessQ
}
\end{array}
\end{math}
}
\end{table}

\emphdef{Structural congruence} is the least congruence defined by the
rules
\[
\nullp \pc \nullp \equiv \nullp
\qquad
\ProcessP \pc (\ProcessQ \pc \ProcessR) \equiv (\ProcessP \pc
\ProcessQ) \pc \ProcessR
\]
while \emphdef{reduction} is the least relation $\red$ defined by the
rules in Table~\ref{tab:reduction}. The rules are familiar and
therefore unremarkable. We assume a deterministic \emphdef{evaluation}
relation $\eval\Expr\Value$ expressing the fact that $\Value$ is the
value of $\Expr$. We write $\red^*$ for the reflexive, transitive
closure of $\red$ and $\ProcessP \nred$ if there is no $\ProcessQ$
such that $\ProcessP \red \ProcessQ$.
With these notions we can characterize the set of correct processes,
namely those that complete every interaction and eventually reduce to
$\nullp$:

\begin{definition}[correct process]
\label{def:correctness}
We say that a process $\Process$ is \emphdef{correct} if $\ProcessP
\red^* \ProcessQ \nred$ implies $\ProcessQ \equiv \nullp$.
\end{definition}

A key ingredient of our development is a notion of process equivalence
that relates two processes $\ProcessP$ and $\ProcessQ$ whenever they
can be completed by the same contexts $\Context$ to form a correct
process. Formally:

\begin{definition}[equivalence]
\label{def:eqp}
We say that two processes $\ProcessP$ and $\ProcessQ$ are
\emphdef{equivalent}, notation $\ProcessP \eqp \ProcessQ$, whenever
for every $\Context$ we have that $\Context[\ProcessP]$ is correct if
and only if $\Context[\ProcessQ]$ is correct.
\end{definition}

Note that the relation $\eqp$ differs from more conventional
equivalences between processes. In particular, 
 $\eqp$ is insensitive
 to the exact time when visible actions are made available on the two
 interfaces of a process. For example, we have
 \begin{equation}
 \label{eq:swap}
   \inp\ChannelL\varX\int.
   \outp\ChannelR\True.
   \inp\ChannelL\varY\unit
   \eqp
   \inp\ChannelL\varX\int.
   \inp\ChannelL\varY\unit.
   \outp\ChannelR\True
 \end{equation}
 despite the fact that the two processes perform visible actions in
 different orders.  Note that the processes in \eqref{eq:swap} are not
 (weakly) bisimilar.


\section{Type System and Isomorphisms}
\label{sec:types}

\emphdef{Session types} $\SessionTypeT$, $\SessionTypeS$, $\dots$ are
defined by the grammar
\[
\SessionType ~~::=~~
\End
~~\mid~~ \In\Type.\SessionType
~~\mid~~ \Out\Type.\SessionType
~~\mid~~ \SessionTypeT + \SessionTypeS
~~\mid~~ \SessionTypeT \oplus \SessionTypeS
\]
and are fairly standard, except for branching $\SessionTypeT +
\SessionTypeS$ and selection $\SessionTypeT \oplus \SessionTypeS$
which are binary instead of $n$-ary operators, consistently with the
process language.
As usual, we denote by $\co\SessionType$ the \emphdef{dual} of
$\SessionType$, namely the session type obtained by swapping inputs
with outputs and selections with branches in $\SessionType$.

We let $\VEnv$ range over \emphdef{environments} which are finite maps
from variables to types of the form \[\var_1 : \Sort_1, \dots, \var_n :
\Sort_n.\]
The typing rules are given in Table~\ref{tab:typing_rules}.
Judgments have the form:
\begin{itemize}
\item  $\wte\VEnv\Expr\Sort$ stating that $\Expr$ is
well typed and has type $\Sort$ in the environment $\VEnv$ and
\item $\wtp\VEnv\Process{\SessionTypeT}{\SessionTypeS}$ stating that
$\Process$ is well typed in the environment $\VEnv$ and uses channel
$\Channel$ according to $\SessionTypeT$ and $\co\Channel$ according to
$\SessionTypeS$.
\end{itemize}

\begin{table}
\caption{Typing rules for expressions and processes.\strut}
\label{tab:typing_rules}
\framebox[\textwidth]{
\begin{math}
\displaystyle
\begin{array}{c}
\inferrule[\rulename{T-var}]{}{
  \wte{\VEnv, \var:\Sort}{\var}\Sort
}
\qquad
\inferrule[\rulename{T-value}]{
  \Value\in\Type
}{
  \wte\VEnv\Value\Type
}
\qquad
\inferrule[\rulename{T-eq}]{
  \wte\VEnv{\Expr_1}\Type
  \\
  \wte\VEnv{\Expr_2}\Type
}{
  \wte\VEnv{\Expr_1=\Expr_2}\tbool
}
\qquad
\inferrule[\rulename{T-input}]{
  \wtp{\VEnv,\var:\Type}\Process{\SessionTypeT}{\SessionTypeS}
}{
  \wtp\VEnv{\inp\Channel\var\Type.\Process}{\In\Type.\SessionTypeT}{\SessionTypeS}
}
\\\\
\inferrule[\rulename{T-output}]{
  \wte\VEnv\Expr\Sort
  \\
  \wtp\VEnv\Process\SessionTypeT\SessionTypeS
}{
  \wtp\VEnv{\outp\Channel\Expr.\Process}{\Out\Type.\SessionTypeT}{\SessionTypeS}
}
\qquad
\inferrule[\rulename{T-branch}]{
  \wtp\VEnv{\Process_i}{\SessionType_i}{\SessionTypeS}~{}^{(i=1,2)}
}{
  \wtp\VEnv{ \casep\Channel{\Process_1}{\Process_2}}{\SessionType_1 + \SessionType_2}{\SessionTypeS}
}
\\\\
\inferrule[\rulename{T-select left}]{
  \wtp\VEnv\Process{\SessionType_1}{\SessionTypeS}
}{
  \wtp\VEnv{ \selectp\Channel\selectL.\Process}{\SessionType_1\oplus \SessionType_2}{\SessionTypeS}
}
\quad
\inferrule[\rulename{T-select right}]{
  \wtp\VEnv\Process{\SessionType_2}{\SessionTypeS}
}{
  \wtp\VEnv{ \selectp\Channel\selectR.\Process}{\SessionType_1 \oplus \SessionType_2}{\SessionTypeS}
}
\quad
\inferrule[\rulename{T-idle}]{}{
  \wtplr\VEnv\nullp\End\End
}
\\\\
\inferrule[\rulename{T-conditional}]{
  \wte\VEnv\Expr\tbool
  \\
  \wtplr\VEnv{\Process_i}{\SessionTypeT}{\SessionTypeS}~{}^{(i=1,2)}
}{
  \wtplr\VEnv{\condp\Expr{\Process_1}{\Process_2}}{\SessionTypeT}{\SessionTypeS}
}
\qquad
\inferrule[\rulename{T-parallel}]{
  \wtplr\VEnv\ProcessP{\SessionTypeT}{\SessionTypeT'}
  \\
  \wtplr\VEnv\ProcessQ{\co\SessionTypeT'}{\SessionTypeS}
}{
  \wtplr\VEnv{\ProcessP \pc \ProcessQ}{\SessionTypeT}{\SessionTypeS}
}
\end{array}
\end{math}
}
\end{table}

\begin{theorem}
  \label{tc}
  If $\wtplr{}\Process\End\End$, then $\Process$ is correct.
\end{theorem}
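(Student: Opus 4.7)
The plan is to follow the standard subject-reduction plus progress pattern, specialized to our recursion-free, two-interface calculus.

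First I would establish \emph{subject reduction}: if $\wtplr\VEnv\ProcessP\SessionTypeT\SessionTypeS$ and $\ProcessP \red \ProcessQ$, then $\wtplr\VEnv\ProcessQ\SessionTypeT\SessionTypeS$. The proof is the usual case analysis on the reduction rule, combined with inversion of the typing rules. In the two communication rules and the two choice rules one appeals to T-parallel: the internal type $\SessionTypeT'$ shared between the two participating components must start with dual constructors ($\Out\Type$ vs.\ $\In\Type$, or $\oplus$ vs.\ $+$), so after the step the residuals remain dual and typing is preserved. The reduction-context case is a straightforward induction, and the structural-congruence case follows from the easy observation that $\equiv$ preserves typing (using T-parallel for associativity and T-idle for $\nullp \pc \nullp \equiv \nullp$).

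Second, since the grammar contains neither recursion nor replication, every reduction strictly decreases a well-founded measure on $\ProcessP$ (for example, the total count of prefixes, selections, branchings, and conditionals). Hence reduction is strongly normalizing, and any maximal reduction sequence $\ProcessP \red^* \ProcessQ \nred$ from a well-typed $\ProcessP$ ends in a well-typed stuck $\ProcessQ$ with the same interface types.

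The crux is the \emph{progress lemma}: if $\wtplr{}\ProcessP\End\End$ and $\ProcessP \nred$, then $\ProcessP \equiv \nullp$. By induction on $\ProcessP$, the input, output, selection and branching cases are immediately ruled out, since typing would force the corresponding interface to start with a non-$\End$ constructor; the conditional case cannot be stuck because $\Expr$ is closed of type $\bool$ and $\downarrow$ is deterministic and total on closed well-typed expressions, so r-cond fires. The interesting case is $\ProcessP = \ProcessP_1 \pc \ProcessP_2$, where T-parallel yields $\wtplr{}{\ProcessP_1}\End{\SessionTypeT'}$ and $\wtplr{}{\ProcessP_2}{\co{\SessionTypeT'}}\End$. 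To handle it I would first prove, by a separate induction, an \emph{exposure} sub-lemma: for any stuck $\wtplr{}\ProcessR\End\SessionTypeT$, either $\SessionTypeT = \End$ and $\ProcessR \equiv \nullp$, or $\ProcessR$ syntactically exposes on channel $\rc$ a prefix whose shape matches the head constructor of $\SessionTypeT$ (symmetrically for the $\lc$ side). Applied to $\ProcessP_1$ and $\ProcessP_2$, this forces $\SessionTypeT' = \End$, because otherwise the two exposed dual prefixes would synchronize via a communication or choice rule, contradicting $\ProcessP \nred$. The induction hypothesis then gives $\ProcessP_1 \equiv \ProcessP_2 \equiv \nullp$, hence $\ProcessP \equiv \nullp$. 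Combining the three steps, every maximal reduction from $\ProcessP$ terminates at some $\ProcessQ \nred$ which, by subject reduction, is still typed $\End,\End$ and therefore $\equiv \nullp$, so $\ProcessP$ is correct.

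The main obstacle is precisely the exposure sub-lemma: its proof must flatten nested $\pc$-compositions and argue that any non-$\End$ interface between two adjacent stuck components would itself enable a communication, pushing the syntactically visible action out to the boundary of $\ProcessR$. Everything else is a routine adaptation of standard session-type metatheory.
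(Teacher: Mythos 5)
Your overall architecture --- subject reduction, termination, and a separate progress lemma --- is a legitimate and more modular route than the paper's proof, which runs a single induction on the sequence of interface types and bundles preservation and progress together; your subject-reduction and termination steps are routine (termination is in fact not needed, since Definition~\ref{def:correctness} only constrains stuck reducts, which subject reduction already keeps well typed). The gap is in the progress lemma, which is where all the content of the theorem lives. Your exposure sub-lemma is stated for stuck processes with one interface typed $\End$, but its own structural induction does not close: in the case $\ProcessR = \ProcessR_1 \pc \ProcessR_2$ with $\derlr{}{\ProcessR_1}{\End}{\SessionTypeT''}$ and $\derlr{}{\ProcessR_2}{\co{\SessionTypeT''}}{\SessionTypeT}$ and $\SessionTypeT'' \neq \End$, the component $\ProcessR_2$ has no $\End$ interface, so neither your lemma nor its symmetric variant applies to it. Worse, the justification you offer for closing this case --- that ``any non-$\End$ interface between two adjacent stuck components would itself enable a communication'' --- is false. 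Consider
\[
(\inp\ChannelL\varX\tint.\outp\ChannelR\varX.\nullp) \pc (\inp\ChannelL\varY\tint.\outp\ChannelR\varY.\nullp),
\]
which is well typed with internal interface $\Out\tint.\End \neq \End$: both components are stuck and so is the composition, because the left component insists on acting on $\ChannelL$ before it is willing to act on the shared interface. A non-$\End$ internal interface guarantees a synchronization only at a position where the left component is ready on $\ChannelR$ \emph{and} the right one is ready on $\ChannelL$.

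The repair is exactly the combinatorial step the paper makes explicit: flatten the stuck process into atomic threads $\ProcessP_1 \pc \cdots \pc \ProcessP_n$, observe that typing forces $\ProcessP_1$ to act only on $\ChannelR$ and $\ProcessP_n$ only on $\ChannelL$, and conclude by a discrete intermediate-value argument that some index $j$ has $\ProcessP_j$ ready on $\ChannelR$ and $\ProcessP_{j+1}$ ready on $\ChannelL$; only there are the two exposed prefixes guaranteed to be dual (both match the head of the interface type $\SessionTypeT_j$), yielding the contradiction with stuckness. Equivalently, strengthen your sub-lemma to arbitrary interface types in disjunctive form --- a stuck, closed, well-typed $\ProcessR$ is either $\equiv \nullp$ with both interfaces $\End$, or its leftmost thread starts with an $\ChannelL$-action matching the head of its left type, or its rightmost thread starts with an $\ChannelR$-action matching the head of its right type --- which does go through by structural induction. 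With either fix your decomposition yields the theorem; as written, the key case is unsupported.
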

\begin{proof}
  Looking at the typing rules it is clear that $\PP$ can only be
  $\nullp$, or a conditional  or a parallel composition.  The first two case are
  immediate.  In the third case let $\PP$ be
  $\PP_1\pc\ldots\pc\PP_i\pc\ldots\pc\PP_n$, where
  $\PP_1,\ldots,\PP_i,\ldots, \PP_n$ are single-threaded. Then rule
  \rulename{T-parallel} requires 
  \begin{center}$\derlr{}{\PP_1}{\eend}{\T_1},
  \derlr{}{\PP_i}{\overline{\T_{i-1}}}{\T_i}$ for $2\leq i\leq n-1$
  and $\derlr{}{\PP_n}{\overline{\T_{n-1}}}{\eend}$
  \end{center} for some types
  $\T_1,\ldots,\T_{n-1}$.  The proof is by induction on
  $\T_1,\ldots,\T_{n-1}$. The first step coincides with the first
  case. For the induction step we can assume that
  $\PP_1,\ldots,\PP_i,\ldots, \PP_n$ are not conditionals, since otherwise at least one of them could be reduced by rule \rulename{r-cond}.  Notice
  that $\rc$ is the only channel in $\PP_1$ and $\lc$ is the only
  channel in $\PP_n$. Then there must be at least one index $j$ ($1\leq j \leq
  n-1$) such that $\PP_j$ 
  starts  with a communication/selection/branching on channel $\rc$  and $\PP_{j+1}$ starts  with a communication/selection/branching on channel $\lc$.  We only consider the case
  $\T_j=\tsel{\T_\selectL} {\T_\selectR}$, the proofs for the other
  cases being similar. Rules \rulename{T-select left},
  \rulename{T-select right} and \rulename{T-branch} require
  $\PP_j\equiv\selectp\rc\Selector.\ProcessQ$ and
  $\PP_{j+1}\equiv\casep\lc{\ProcessQ_\selectL}{\ProcessQ_\selectR}$. Therefore
  $\PP\red\PP_1\pc\ldots\pc\Q\pc\Q_\Selector\pc\ldots\pc\PP_n$ by
  rules \rulename{r-choice 1} and \rulename{r-context}. This concludes
  the proof, since $\derlr{}{\Q}{\overline{\T_{j-1}}}{\T_\Selector}$,
  $\derlr{}{\Q_\Selector}{\overline{\T_\Selector}}{\T_{j+1}}$.
\end{proof}

To have an isomorphism between two session types $\SessionTypeT$ and
$\SessionTypeS$, we need a process $\AdapterA$ that behaves according
to $\co\SessionTypeT$ on its left interface and according to
$\SessionTypeS$ on its right interface. In this way, the process
``transforms'' $\SessionTypeT$ into $\SessionTypeS$. Symmetrically,
there must be a process $\AdapterB$ that performs the inverse
transformation. Not all of these transformations are isomorphisms,
because we also require that these transformations must not entail any
\emph{loss of information}.
Given a session type $\SessionTypeT$, the simplest process with this
property is the \emphdef{identity} process $\id\SessionTypeT$ defined
below:
\[
\id\End = \nullp
\qquad
\begin{array}{r@{~}c@{~}l}
\id{\Out\Type.\SessionType} & = &
\inp\ChannelL\var\Type.\outp\ChannelR\var.\id\SessionType
\\
\id{\In\Type.\SessionType} & = &
\inp\ChannelR\var\Type.\outp\ChannelL\var.\id\SessionType
\end{array}
\qquad
\begin{array}{r@{~}c@{~}l}
\id{\SessionTypeT \oplus \SessionTypeS} & = &
\casep\ChannelL{\selectp\ChannelR\selectL.\id\SessionTypeT}{\selectp\ChannelR\selectR.\id\SessionTypeS}
\\
\id{\SessionTypeT + \SessionTypeS} & = &
\casep\ChannelR{\selectp\ChannelL\selectL.\id\SessionTypeT}{\selectp\ChannelL\selectR.\id\SessionTypeS}
\end{array}
\]

Notice that
$\wtplr{}{\id\SessionType}{\co\SessionType}\SessionType$. We can now
formalize the notion of session type isomorphism:

\begin{definition}[isomorphism]
\label{def:isom}
We say that the session types $\SessionTypeT$ and $\SessionTypeS$ are
\emphdef{isomorphic}, notation $\SessionType \isom \SessionTypeS$, if
there exist two processes $\AdapterA$ and $\AdapterB$ such that
$\wtplr{}{\AdapterA}{\co\SessionTypeT}{\SessionTypeS}$ and
$\wtplr{}{\AdapterB}{\co\SessionTypeS}{\SessionTypeT}$ and $\AdapterA
\pc \AdapterB \eqp \id\SessionTypeT$ and $\AdapterB \pc \AdapterA \eqp
\id\SessionTypeS$.
\end{definition}

\begin{example}\label{ex1}
  Let $\SessionTypeT \eqdef \Out\tint.\Out\tbool.\End$ and
  $\SessionTypeS \eqdef \Out\tbool.\Out\tint.\End$ and observe that
  $\SessionTypeT$ and $\SessionTypeS$ differ in the order in which
  messages are sent. Then we have $\SessionTypeT \isom
  \SessionTypeS$. Indeed, if we take
\[
\AdapterA \eqdef
  \inp\ChannelL\varX\tint.  \inp\ChannelL\varY\tbool.
  \outp\ChannelR\varY.  \outp\ChannelR\varX.  \nullp
  \text{\quad  and \quad}
  \AdapterB
  \eqdef
  \inp\ChannelL\varX\tbool.\inp\ChannelL\varY\tint.\outp\ChannelR\varY.\outp\ChannelR\varX.\nullp
\]
  we derive $\wtplr{}{\AdapterA}{\co\SessionTypeT}{\SessionTypeS}$ and
  $\wtplr{}{\AdapterB}{\co\SessionTypeS}{\SessionTypeT}$ and moreover
  $\AdapterA \pc \AdapterB \eqp \id\SessionTypeT$ and $\AdapterB \pc
  \AdapterA \eqp \id\SessionTypeS$.
%
%
\eoe
\end{example}

\begin{example}
\label{ex:guess}
Showing that two session types are \emph{not} isomorphic is more
challenging since we must prove that there is no pair of processes
$\AdapterA$ and $\AdapterB$ that turns one into the other without
losing information. We do so reasoning by contradiction. Suppose for
example that $\Out\tint.\End$ and $\End$ are isomorphic. Then, there
must exist $\wtplr{}{\AdapterA}{\In\tint.\End}{\End}$ and
$\wtplr{}{\AdapterB}{\End}{\Out\tint.\End}$.
The adapter $\AdapterB$ is suspicious, since it must send a message of
type $\tint$ on channel $\ChannelR$ without ever receiving such a
message from channel $\ChannelL$. Then, it must be the case that
$\AdapterB$ ``makes up'' such a message, say it is $n$ (observe that
our calculus is deterministic, so $\AdapterB$ will always output the
same integer $n$).
We can now unmask $\AdapterB$ showing a context that distinguishes
$\id{\Out\tint.\End}$ from $\AdapterA \pc \AdapterB$. Consider
\[
  \Context \eqdef 
  \outp\ChannelR{n+1}.\nullp
  \pc
  \hole
  \pc
  \inp\ChannelL\var\tint.
  \cond{\var = n+1}{\nullp}{\outp\ChannelR\False.\nullp}
\]
and observe that $\Context[\id{\Out\tint.\End}]$ is correct whereas
$\Context[\AdapterA \pc \AdapterB]$ is not because
\[
\Context[\AdapterA \pc \AdapterB] \red^*
\nullp \pc \condp{n = n+1}{\nullp}{\outp\ChannelR\False.\nullp}
\red
\nullp \pc \outp\ChannelR\False.\nullp \nred
\]
This means that $\AdapterA \pc \AdapterB \not\eqp
\id{\Out\tint.\End}$, contradicting the hypothesis that $\AdapterA$
and $\AdapterB$ were the witnesses of the isomorphism $\Out\tint.\End
\isom \End$.
\eoe
\end{example}

\begin{example}
  Another interesting pair of non-isomorphic types is given by
  $\SessionTypeT \eqdef \In\tint.\Out\tbool.\End$ and $\SessionTypeS
  \eqdef \Out\tbool.\In\tint.\End$.
  A lossless transformation from $\SessionTypeS$ to $\SessionTypeT$
  can be realized by the process
\[
\AdapterB \eqdef
  \inp\ChannelL\varX\tbool.\inp\ChannelR\varY\tint.\outp\ChannelR\varX.\outp\ChannelL\varY.\nullp\,,
\]
  which reads one message from each interface and forwards it to the
  opposite one.
  The inverse transformation from $\SessionTypeT$ to $\SessionTypeS$
  is unachieavable without loss of information. Such process
  necessarily sends at least one message (of type $\tint$ or of type
  $\tbool$) on one interface \emph{before} it receives the message of
  the same type from the opposite interface. Therefore, just like in
  Example~\ref{ex:guess}, such process must guess the message to send,
  and in most cases such message does not coincide with the one the
  process was supposed to forward.
  \eoe
\end{example}



\begin{table}
  \caption{Session type isomorphisms.\strut}\label{tab:axioms}
\framebox[\textwidth]{
\begin{math}
\begin{array}{rr@{~}c@{~}l@{\qquad}rr@{~}c@{~}l}
  \rulename{a1} & \Out\TypeT.\Out\TypeS.\SessionType & \isom & \Out\TypeS.\Out\TypeT.\SessionType &
  \rulename{a2} & \In\TypeT.\In\TypeS.\SessionType & \isom & \In\TypeS.\In\TypeT.\SessionType \\
  \rulename{a3} & \Out\TypeT.(\SessionTypeT \oplus \SessionTypeS) & \isom & \Out\TypeT.\SessionTypeT \oplus \Out\TypeT.\SessionTypeS &
  \rulename{a4} & \In\TypeT.(\SessionTypeT + \SessionTypeS) & \isom &
  \In\TypeT.\SessionTypeT + \In\TypeT.\SessionTypeS \\
  \rulename{a5} & \Out\tunit.\SessionType & \isom & \SessionType &
  \rulename{a6} & \In\tunit.\SessionType& \isom & \SessionType \\
  \rulename{a7} & \Out\tbool.\SessionType& \isom &  \SessionType \oplus \SessionType &
  \rulename{a8} & \In\tbool.\SessionType & \isom &  \SessionType + \SessionType\\
  \rulename{a9} & \SessionTypeT \oplus \SessionTypeS & \isom & \SessionTypeS \oplus \SessionTypeT &
  \rulename{a10} & \SessionTypeT + \SessionTypeS & \isom & \SessionTypeS + \SessionTypeT \\
  \rulename{a11} & (\SessionTypeT_1 \oplus \SessionTypeT_2) \oplus \SessionType_3 & \isom & \SessionTypeT_1 \oplus (\SessionTypeT_2 \oplus \SessionType_3) &
  \rulename{a12} & (\SessionType_1 + \SessionType_2) + \SessionType_3 & \isom &  \SessionType_1 + (\SessionType_2 + \SessionType_3) 
\end{array}
\end{math}
}
\end{table}

Table~\ref{tab:axioms} gathers the session type isomorphisms that we
have identified.
There is a perfect duality between the odd-indexed axioms (about
outputs/selections, on the left) and the even-indexed axioms (about inputs/branchings, on
the right), so we briefly discuss the odd-indexed axioms only.
Axiom~\rulename{a1} is a generalization of the isomorphism discussed
in Example \ref{ex1} and is proved by a similar adapter.
Axiom~\rulename{a3} distributes the \emph{same} output on a
selection. Basically, this means that the moment of selection is
irrelevant with respect to other adjacent output operations.
Axiom~\rulename{a5} shows that sending the unitary value provides no
information and therefore is a superfluous operation.
Axiom~\rulename{a7} shows that sending a boolean value is equivalent
to making a selection, provided that the continuation does not depend
on the particular boolean value that is sent. In general, any data
type with finitely many values can be encoded as possibly nested
choices.
Axiom~\rulename{a9}, corresponding to the commutativity of $\oplus$
wrt $\isom$, shows that the actual label used for making a selection
is irrelevant, only the continuation matters.
Axiom~\rulename{a11}, corresponding to the associativity for $\oplus$
wrt $\isom$, generalizes the irrelevance of labels seen
in~\rulename{a9} to nested selections.
Since $\isom$ is a congruence, the axioms in Table~\ref{tab:axioms}
can also be closed by transitivity and arbitrary session type
contexts.
%

%


\begin{table}\caption{Adapters for type isomorphism.}\label{tab:adapters}
\framebox[\textwidth]{$
\begin{array}{llllllll}
  \Adapter_1    =   \inp{\lc}{x}{\TypeT}.\inp{\lc}{y}{\TypeS}. \procout{\rc}{y}{}\procout{\rc}{x}{}\idA{\T}{\T}  \qquad \qquad
  \AdapterB_1  =   \inp{\lc}{x}{\TypeS}.\inp{\lc}{y}{\TypeT}. \procout{\rc}{y}{}\procout{\rc}{x}{}\idA{\T}{\T} \\
  \Adapter_2    =   \inp{\rc}{x}{t}.\inp{\rc}{y}{s}.\procout{\lc}{y}{}\procout{\lc}{x}{}\idA{\T}{\T}            \qquad\qquad
  \AdapterB_2  =   \inp{\rc}{x}{\TypeS}.\inp{\rc}{y}{\TypeT}.\procout{\lc}{y}{}\procout{\lc}{x}{}\idA{\T}{\T}  \\
  \Adapter_3    =   \inp{\lc}{x}{t}.\procbr{\lc}{\procL{\rc}{}\procout{\rc}{x}{}\idA{\SessionTypeT}{\SessionTypeT}}{\procR{\rc}{}\procout{\rc}{x}{}\idA{\SessionTypeS}{\SessionTypeS}}\\
  \AdapterB_3  =  \procbr{\lc}{\inp{\lc}{x}{\TypeT}.\procout{\rc}{x}{}\procL{\rc}{}\idA{\SessionTypeT}{\SessionTypeT}}{\inp{\lc}{x}{\TypeT}.\procout{\rc}{x}{}\procR{\rc}{}\idA{\SessionTypeS}{\SessionTypeS}}\\
  \Adapter_4  =   \procbr{\rc}{\inp{\rc}{x}{t}.\procout{\lc}{x}{}\procL{\lc}{}\idA{\SessionTypeT}{\SessionTypeT}}{\inp{\rc}{x}{t}.\procout{\lc}{x}{}\procR{\lc}{}\idA{\SessionTypeS}{\SessionTypeS}}\\
   \AdapterB_4  =  \inp{\rc}{x}{t}.\procbr{\rc}{\procL{\lc}{}\procout{\lc}{x}{}\idA{\SessionTypeT}{\SessionTypeT}}{\procR{\lc}{}\procout{\lc}{x}{}\idA{\SessionTypeS}{\SessionTypeS}}\\
  \Adapter_5  =  \inp{\lc}{x}{\unit}.\idA{T}{T}  \qquad \qquad \qquad \qquad \qquad \quad
  \AdapterB_5  =  \procout{\rc}{\Unit}{}\idA{\T}{\T} \\
  \Adapter_6  =  \procout{\lc}{\Unit}{}\idA{\T}{\T}  \qquad \qquad \qquad \qquad \qquad \qquad \quad \;
  \AdapterB_6  =  \inp{\rc}{x}{\unit}.\idA{\T}{\T} \\
 \Adapter_7  = \inp{\lc}{x}{\bool}.\cond{x}{\procL{\rc}{\idA{\T}{\SessionTypeT}}}{\procR{\rc}{\idA{\SessionTypeT}{\SessionTypeS}}} \;\;\;
          \AdapterB_7  =  \procbr{\lc}{\procout{\rc}{\true}{}\idA{\SessionTypeT}{\T}}{\procout{\rc}{\false}{}\idA{\SessionTypeT}{\T}} \\
  \Adapter_8  =  \procbr{\rc}{\procout{\lc}{\true}{\idA{\T}{\SessionTypeT}}}{\procout{\lc}{\false}{\idA{\T}{\SessionTypeT}}}  \;\;
         \AdapterB_8  =  \inp{\rc}{x}{\bool}.\cond{x}{\procL{\lc}{\idA{\SessionTypeT}{\T}}}{\procR{\lc}{\idA{\SessionTypeT}{\T}}} \\
  \Adapter_9    = \procbr{\lc}{\procR{\rc}{\idA{\SessionTypeT}{\SessionTypeT}}}{\procL{\rc}{\idA{\SessionTypeS}{\SessionTypeS}}}  \qquad \qquad \qquad \;
   \AdapterB_9  = \procbr{\lc}{\procR{\rc}{\idA{\SessionTypeS}{\SessionTypeS}}}{\procL{\rc}{\idA{\SessionTypeT}{\SessionTypeT}}} \\
  \Adapter_{10}    = \procbr{\rc}{\procR{\lc}{\idA}{\SessionTypeS}{\SessionTypeS}}{\procL{\lc}{\idA{\SessionTypeT}{\SessionTypeT}}}  \qquad \qquad \quad \;\;\;\;
  \AdapterB_{10}  =   \procbr{\rc}{\procR{\lc}{\idA}{\SessionTypeT}{\SessionTypeT}}{\procL{\lc}{\idA{\SessionTypeS}{\SessionTypeS}}} \\
  \Adapter_{11}  =  \procbr{\lc}{\procbr{\lc}{\procL{\rc}{\idA{\SessionTypeT_1}{\SessionTypeT}}}{\procR{\rc}{}\procL{\rc}{}{\idA{\SessionTypeT_2}{\SessionTypeS}}}}{\procR{\rc}{\procR{\rc}{\idA{\T_3}{\T_3}}}} \\
             \AdapterB_{11}  =  \procbr{\lc}{\procL{\rc}{\procL{\rc}{\idA{\SessionTypeT_1}{\SessionTypeT}}}}{\procbr{\lc}{\procL{\rc}{\procR{\rc}{\idA{\SessionTypeT_2}{\SessionTypeS}}}}{\procR{\rc}{\idA{\T_3}{\T_3}}}}\\
  \Adapter_{12}  =  \procbr{\rc}{\procL{\lc}{\procL{\lc}{\idA{\SessionTypeT_1}{\SessionTypeT}}}}{\procbr{\rc}{\procL{\lc}{\procR{\lc}{\idA{\SessionTypeT_2}{\SessionTypeS}}}}{\procR{\lc}{\idA{\T_3}{\T_3}}}}\\             
             \AdapterB_{12}  =  \procbr{\rc}{\procbr{\rc}{\procL{\lc}{\idA{\SessionTypeT_1}{\SessionTypeT}}}{\procR{\lc}{}\procL{\lc}{}{\idA{\SessionTypeT_2}{\SessionTypeS}}}}{\procR{\lc}{\procR{\lc}{\idA{\T_3}{\T_3}}}}
\end{array}
$}
\end{table}

\begin{table}[t]
\caption{\label{tab:areduction} Symbolic reduction relation.\strut}
\framebox[\textwidth]{
$
\begin{array}{l}
\begin{array}{ll}
\rulename{sr-up 1} \;
 \inp\ChannelL\var\TypeT.\ProcessP
  \pc
  \ProcessQ
  \ared
  \inp\ChannelL\var\TypeT.(\ProcessP
  \pc
  \ProcessQ)
 &
\!\!\!\!\!\!\rulename{sr-up 2} \;
 \ProcessP
  \pc\inp\ChannelR\var\TypeT.
  \ProcessQ
  \ared
  \inp\ChannelR\var\TypeT.(\ProcessP
  \pc
  \ProcessQ)
\\[1pt]
\rulename{sr-up 3} \;
   \outp\ChannelL\vartX.\ProcessP
  \pc
\ProcessQ
  \ared
   \outp\ChannelL\vartX.(\ProcessP
  \pc
  \ProcessQ)
&
\!\!\!\!\!\!\rulename{sr-up 4} \; 
  \ProcessP
  \pc
  \outp\ChannelR\vartX.\ProcessQ
  \ared
   \outp\ChannelR\vartX.(\ProcessP
  \pc
  \ProcessQ)
\\[1pt]
\rulename{sr-up 5}  \;
\casep\ChannelL{\ProcessP_\selectL}{\ProcessP_\selectR}
\pc \ProcessQ \ared \casep\ChannelL{\ProcessP_\selectL\pc \ProcessQ }{\ProcessP_\selectR\pc \ProcessQ }
&
\!\!\!\!\!\!\rulename{sr-up 7}  \;
   \selectp\ChannelL\Selector.\ProcessP\pc\ProcessQ
  \ared
  \selectp\ChannelL\Selector.(\ProcessP \pc \ProcessQ)
\\[1pt]
\rulename{sr-up 6} \;
  \ProcessP
  \pc
  \casep\ChannelR{\ProcessQ_\selectL}{\ProcessQ_\selectR}
  \ared
  \casep\ChannelR{\ProcessP\pc\ProcessQ_\selectL}{\ProcessP\pc\ProcessQ_\selectR}
&
\!\!\!\!\!\!\rulename{sr-up 8} \;
  \ProcessP\pc \selectp\ChannelR\Selector.\ProcessQ
  \ared
  \selectp\ChannelR\Selector.(\ProcessP \pc \ProcessQ) 
\\[1pt]
\end{array}
\\
\begin{array}{l}
\rulename{sr-up  9} \; 
  (\condp\var{\ProcessP_1}{\ProcessP_2})\pc \ProcessQ \ared \condp\var{(\ProcessP_1\pc \ProcessQ)}{(\ProcessP_2\pc \ProcessQ)}
\\[1pt]
\rulename{sr-up  10} \;
 \ProcessP \pc (\condp\var{\ProcessQ_1}{\ProcessQ_2}) \ared \condp\var{(\ProcessP\pc \ProcessQ_1)}{(\ProcessP\pc \ProcessQ_2)} 
 \\[1pt]
\end{array}
\\
\begin{array}{ll}
\rulename{sr-swap 1} \;
  \inp\Channel\var\TypeT.\inp{\co\Channel}\varY\TypeS.\ProcessP
  \ared
  \inp{\co\Channel}\varY\TypeS. \inp\Channel\var\TypeT.\ProcessP
& 
\rulename{sr-swap 2} \;
 \outp\Channel\var. \outp{\co\Channel}{\varY}.\ProcessP
 \ared
  \outp{\co\Channel}{\varY}.\outp\Channel\var.\ProcessP
\end{array}
\\[1pt]
\begin{array}{l}
\rulename{sr-swap 3} \;
 \inp\Channel\var\TypeT. \outp{\co\Channel}\varY.\ProcessP
 \aredLR
  \outp{\co\Channel}\varY.\inp\Channel\var\TypeT.\ProcessP
\quad\var\not=\varY
\\[1pt]
\rulename{sr-swap 4} \; 
 \inp\Channel\var\TypeT. \selectp{\co\Channel}\Selector.\ProcessP
 \aredLR
  \selectp{\co\Channel}\Selector.\inp\Channel\var\TypeT.\ProcessP
\\[1pt]
\rulename{sr-swap 5} \; 
  \outp\Channel\var. \selectp{\co\Channel}\Selector.\ProcessP
 \aredLR
  \selectp{\co\Channel}\Selector. \outp\Channel\var.\ProcessP
\\[1pt]
%
\rulename{sr-swap 6} \;
 \inp\Channel\var\TypeT. \casep{\co\cc}\ProcessP\ProcessQ
 \aredLR
  \casep{\co\cc}{ \inp\Channel\var\TypeT. \ProcessP}{ \inp\Channel\var\TypeT. \ProcessQ}
\\[1pt]
\rulename{sr-swap 7} \;
  \outp\Channel\var.  \casep{\co\cc}\ProcessP\ProcessQ
 \aredLR
  \casep{\co\cc}{\outp\Channel\var. \ProcessP}{\outp\Channel\var. \ProcessQ}
\\[1pt]
\rulename{sr-swap 8} \;
  \casep{\Channel}{\selectp{\co\Channel}\Selector. \ProcessP}{\selectp{\co\Channel}\Selector.\ProcessQ}  
 \aredLR
 \selectp{\co\Channel}\Selector.  \casep{\cc}\ProcessP\ProcessQ
\\[1pt]
\rulename{sr-swap 9} \; 
\selectp{\Channel}\Selector. \selectp{\co\Channel}{\Selector'}. \ProcessP
 \aredLR
  \selectp{\co\Channel}{\Selector'}.\selectp{\Channel}\Selector.\ProcessP
\\[1pt]
\rulename{sr-swap 10} \;
  \casep{\Channel}{\casep{\co\Channel}{\ProcessP_1}{\ProcessQ_1}}
  {\casep{\co\Channel}{\ProcessP_2}{\ProcessQ_2}}  
 \aredLR
 \casep{\co\Channel}{\casep{\Channel}{\ProcessP_1}{\ProcessP_2}}
  {\casep{\Channel}{\ProcessQ_1}{\ProcessQ_2}} 
\\[1pt]
\rulename{sr-cond} \;
{
  \condp\var{\outp{\Channel}\true.\ProcessP}{\outp\Channel\false.\ProcessP} \ared \outp{\Channel}\vartX.\ProcessP
}
\\[1pt]
\end{array} 
\\
\begin{array}{ll}
\rulename{sr-comm 1} \; 
  \outp\ChannelR\vartY.\ProcessP
  \pc
  \inp\ChannelL\var\TypeT.\ProcessQ
  \ared
  \ProcessP
  \pc
  \ProcessQ\subst\varY\var
&
\rulename{sr-comm 2} \; 
  \inp\ChannelR\var\TypeT.\ProcessP
  \pc
  \outp\ChannelL\vartY.\ProcessQ
  \ared
  \ProcessP\subst\varY\var
  \pc
  \ProcessQ
\\[1pt]
\rulename{sr-choice 1} \;
  \selectp\ChannelR\Selector.\ProcessP
  \pc
  \casep\ChannelL{\ProcessQ_\selectL}{\ProcessQ_\selectR}
  \ared
  \ProcessP \pc \ProcessQ_\Selector
&
\rulename{sr-choice 2} \;
  \casep\ChannelR{\ProcessP_\selectL}{\ProcessP_\selectR} \pc \selectp\ChannelL\Selector.\ProcessQ
  \ared
    \ProcessP_\Selector \pc \ProcessQ
\\[1pt]
\rulename{sr-id} \;
  \idA\T\T\pc\idA\T\T \ared \idA\T\T
&
\inferrule[\rulename{sr-contexts}]{
  \ProcessP \ared \ProcessQ}
{
 \ContextS[\ProcessP] \ared \ContextS[\ProcessQ]
}
\end{array}
\end{array}$
}
\end{table}

Table~\ref{tab:adapters} gives all the adapters of the axioms in Table~\ref{tab:axioms}. Then the soundness of the axioms in Table \ref{tab:axioms} amounts to prove:
\begin{equation}\derlr{}{\Adapter_ i}{\overline {\T_i}}{\SessionTypeS_i}\qquad \derlr{}{\AdapterB_ i}{\overline {\SessionTypeS_i}}{\T_i}\label{t}\end{equation}

\begin{equation}\Adapter_ i\pc\AdapterB_ i\eqp\id{\T_i}\qquad\AdapterB_ i\pc\Adapter_ i\eqp\id{\SessionTypeS_i}\label{i}\end{equation}
where $\T_i$ is the l.h.s. and $\SessionTypeS_i$ is the r.h.s. of the axiom \rulename{a$i$} for $1\leq i\leq 12$.

\bigskip

Point \ref{t} can be easily shown by cases on the definitions of $\Adapter_ i$ and $\AdapterB_ i$  taking into account that \[\derlr{}{\id\T}{\overline\T}\T\] for all types $\T$. 

\bigskip

For Point \ref{i} we define a {\em symbolic reduction relation} which preserves equivalence of closed and typed processes (Theorem~\ref{a}). This is enough since we will show that 
  all the parallel compositions of the adapters symbolically reduce to
  the corresponding identities (Theorem~\ref{b}).  
The rules of this relation are given in 
Table  \ref{tab:areduction}, where $\aredLR$ stands for reduction in both directions and 
\emph{symbolic reduction contexts} $\ContextS$ are defined by:
\begin{eqnarray*}
   \ContextS & ::= &  \hole ~~
~~\mid~~ \inp\Channel\var\Type.\ContextS
~~\mid~~ \outp\Channel\Expr.\ContextS
~~\mid~~ \selectp\Channel\Selector.\ContextS 
~~\mid~~\casep\Channel\ContextS\ProcessQ 
~~ \mid~~ \casep\Channel\ProcessP\ContextS \\
& \mid & \condp\Expr\ProcessP\ContextS
~~\mid~~ \condp\Expr\ContextS\ProcessQ
\end{eqnarray*}
We call this a symbolic reduction relation because it also reduces processes with free variables. 
We notice that this reduction applied to two parallel processes:
\begin{enumerate}
\item moves up the communications/selections/branchings on the left channel of the left process and  the communications/selections/branchings on the right channel of the right process and the conditionals,
\item  executes the communications/selections/branchings between the  right channel of the left process and the left channel of the right process when possible,
\item eliminates superfluous identities,
\item swaps communications/selections/branchings on different channels when this  is not forbidden by bound variables.
\end{enumerate}
The more interesting rule is \rulename{sr-cond}, that transforms a conditional in an output. 

\begin{theorem}\label{a}
If $\ProcessP$ is a closed and typed process and $\ProcessP\ared^*\ProcessQ$,  then  $\ProcessP\eqp\ProcessQ$. 
\end{theorem}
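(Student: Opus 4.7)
The plan is to prove the theorem in three layers. First, by routine induction on the length of the reduction sequence, it suffices to show that a single step $\ProcessP \ared \ProcessQ$ implies $\ProcessP \eqp \ProcessQ$. Second, we proceed by induction on the derivation of $\ProcessP \ared \ProcessQ$, handling the closure rule \rulename{sr-contexts} via a congruence lemma for $\eqp$, and the remaining base rules by direct case analysis.

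The congruence lemma asserts that $\eqp$, suitably extended to open processes by closure under well-typed closing substitutions (since $\ContextS$ may bind variables), is preserved by every symbolic reduction context $\ContextS$. The parallel cases $\hole \pc \ProcessR$ and $\ProcessR \pc \hole$ are immediate, because these shapes are already reduction contexts in the sense of Definition~\ref{def:eqp}. For prefix, branching, and conditional constructors, the argument takes an arbitrary distinguishing reduction context $\Context$, drives $\Context[\ContextS[\ProcessP]]$ through the actions of $\ContextS$ until the inner subterm becomes observable, and then appeals to $\ProcessP \eqp \ProcessQ$ on the residual closed process arising from the values and branches the environment has been forced to supply.

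For the base rules, each case reduces to a small local observation. The rules \rulename{sr-comm}, \rulename{sr-choice}, and \rulename{sr-cond} correspond, after appropriate substitution, to genuine steps of $\red$ --- in the case of \rulename{sr-cond} to a chain of steps that must be validated symmetrically on both branches of the conditional --- so correctness is preserved trivially. The \rulename{sr-up} rules formalise that actions on $\lc$ in the leftmost component and on $\rc$ in the rightmost component interact only with the external interface, so promoting them out of a parallel composition does not alter the set of reachable configurations. The \rulename{sr-swap} rules witness that actions on different channels, or on the same channel from complementary sides, are observationally independent, with the side conditions (e.g.\ $\varX \neq \varY$ in \rulename{sr-swap 3}) ruling out data dependencies. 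The rule \rulename{sr-id} is handled by a secondary induction on $\T$, unfolding $\id\T$ on both sides of $\id\T \pc \id\T \ared \id\T$ and reducing further via the other symbolic rules to the common normal form.

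The main obstacle is the congruence lemma under prefix, branching, and conditional contexts, since $\eqp$ itself is only defined for closed processes and its reduction contexts do not descend into prefixes. Lifting the equivalence through binders requires showing that any distinguishing behaviour of $\Context[\ContextS[\ProcessP]]$ must eventually expose the hole under some closing substitution $\sigma$, at which point the hypothesis $\ProcessP\sigma \eqp \ProcessQ\sigma$ contradicts the distinction. Making this precise --- particularly the branching and conditional cases, where the external context may select either branch and the equivalence must hold uniformly on both --- is where the technical effort of the proof concentrates.
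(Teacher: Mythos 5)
Your proposal follows essentially the same route as the paper: induction on the derivation of $\ared$, with the real work lying in tracking how an arbitrary reduction context drives $\Context[\ContextS[\cdot]]$ until the redex is exposed under the closing substitution forced by the environment, at which point the step becomes a genuine $\red$-step or a harmless reordering. The only organizational differences are that the paper inlines the symbolic context $\ContextS$ and the induced substitution $\subst{\vec\Value}{\vec\varY}$ into each rule case rather than isolating a separate congruence lemma for $\eqp$, and it treats \rulename{sr-id} as a primitive case rather than deriving it from the other rules by induction on $\T$; both are presentational rather than substantive.
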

\begin{proof} The proof is by induction on the reduction of Table~\ref{tab:areduction} and by cases on the last applied rule.  Notice that the proof for the swap rules is immediate, since these rules can be always reversed.
We  consider some interesting cases, in which we assume $\ProcessR_1\pc\ContextS\pc\ProcessR_2\red^*\ProcessR'_1\pc\hole\pc\ProcessR'_2$ (by extending reduction to contexts in the obvious way) and  that $\subst{\vec\Value}{\vec\varY}$ are the substitutions made on the hole in this reduction.

\medskip

\noindent
\rulename{sr-up 1}
If $\ProcessR_1\pc\ContextS[\inp\ChannelL\var\TypeT.\ProcessP
  \pc
  \ProcessQ]\pc\ProcessR_2$ is correct, then each reduction from $\ProcessR_1\pc\ContextS[\inp\ChannelL\var\TypeT.\ProcessP
  \pc
  \ProcessQ]\pc\ProcessR_2$ to $\nullp$ must be of the shape
  \[\begin{array}{l}\ProcessR_1\pc\ContextS[\inp\ChannelL\var\TypeT.\ProcessP
  \pc
  \ProcessQ]\pc\ProcessR_2\red^*\ProcessR'_1\pc(\inp\ChannelL\var\TypeT.\ProcessP
  \pc
  \ProcessQ)\subst{\vec\Value}{\vec\varY}\pc\ProcessR'_2
  \red^*\\\outp\ChannelR\Expr.\ProcessR\pc\inp\ChannelL\var\TypeT.\ProcessP\subst{\vec\Value}{\vec\varY}
  \pc
  \ProcessQ'\red^*\ProcessR\pc\ProcessP\subst{\vec\Value}{\vec\varY}\subst\Value\var\pc
  \ProcessQ'\red^*\nullp\end{array}\] 
 where   $\ProcessR'_1\red^*\outp\ChannelR\Expr.\ProcessR$  with  $\eval\Expr\Value,$  $\Value \in \Type$, and $\ProcessQ\subst{\vec\Value}{\vec\varY}\pc\ProcessR'_2\red^*\ProcessQ'$. 
  We get
  \[\begin{array}{l}\ProcessR_1\pc\ContextS[\inp\ChannelL\var\TypeT.(\ProcessP
  \pc
  \ProcessQ)]\pc\ProcessR_2 \red^*
  \ProcessR'_1\pc\inp\ChannelL\var\TypeT.(\ProcessP
  \pc
  \ProcessQ)\subst{\vec\Value}{\vec\varY}\pc\ProcessR'_2 
  \red^*\\\outp\ChannelR\Expr.\ProcessR\pc\inp\ChannelL\var\TypeT.(\ProcessP
  \pc
  \ProcessQ)\subst{\vec\Value}{\vec\varY}\pc\ProcessR'_2
\red
\ProcessR\pc\ProcessP\subst{\vec\Value}{\vec\varY}\subst\Value\var\pc\ProcessQ\subst{\vec\Value}{\vec\varY}\pc\ProcessR'_2
  \red^*\\\ProcessR\pc\ProcessP\subst{\vec\Value}{\vec\varY}\subst\Value\var\pc
  \ProcessQ'\red^*\nullp\end{array}\]
  Vice versa if $\ProcessR_1\pc\ContextS[\inp\ChannelL\var\TypeT.(\ProcessP
  \pc
  \ProcessQ)]\pc\ProcessR_2$ is correct, then each reduction from \mbox{$\ProcessR_1\pc\ContextS[\inp\ChannelL\var\TypeT.(\ProcessP
  \pc
  \ProcessQ)]\pc\ProcessR_2$}  to $\nullp$ must be of the shape shown above, and the proof concludes similarly. 
 
 \medskip
 
 \noindent
\rulename{sr-up 7} 
 If  $\ProcessR_1 \pc \ContextS[\selectp\ChannelL\selectL.\ProcessP\pc\ProcessQ] \pc \ProcessR_2$ is correct,  then each reduction from 
  $\ProcessR_1 \pc \ContextS[\selectp\ChannelL\selectL.\ProcessP\pc\ProcessQ] \pc \ProcessR_2$ to $\nullp$ must be of the shape
  \[\begin{array}{l}\ProcessR_1 \pc \ContextS[\selectp\ChannelL\selectL.\ProcessP\pc\ProcessQ] \pc \ProcessR_2\red^*\ProcessR'_1\pc(\selectp\ChannelL\selectL.\ProcessP
  \pc
  \ProcessQ)\subst{\vec\Value}{\vec\varY}\pc\ProcessR'_2\red^*\\
  \procbr{\rc}{\ProcessP_\selectL}{\ProcessP_\selectR}\pc \selectp\ChannelL\selectL.\ProcessP\subst{\vec\Value}{\vec\varY}\pc\ProcessQ'\red^* 
  \ProcessP_\selectL \pc \ProcessP\subst{\vec\Value}{\vec\varY} \pc \ProcessQ'  \red^* \nullp\end{array}
  \]
 where $\ProcessR'_1 \red^* \procbr{\rc}{\ProcessP_\selectL}{\ProcessP_\selectR}$ and $\ProcessQ\subst{\vec\Value}{\vec\varY}\pc\ProcessR'_2\red^*\ProcessQ'$. 
  We get
  \[\begin{array}{l}
  \ProcessR_1 \pc \ContextS[\selectp\ChannelL\selectL.(\ProcessP \pc \ProcessQ)] \pc \ProcessR_2
  \red^*
   \ProcessR'_1 \pc \selectp\ChannelL\selectL.(\ProcessP \pc \ProcessQ)\subst{\vec\Value}{\vec\varY} \pc \ProcessR'_2
  \red^*\\ 
  \procbr{\rc}{\ProcessP_\selectL}{\ProcessP_\selectR} \pc \selectp\ChannelL\selectL.(\ProcessP \pc \ProcessQ)\subst{\vec\Value}{\vec\varY} \pc \ProcessR'_2 \red
  \ProcessP_\selectL \pc \ProcessP\subst{\vec\Value}{\vec\varY} \pc \ProcessQ\subst{\vec\Value}{\vec\varY} \pc \ProcessR'_2 \red^*\\
  \ProcessP_\selectL \pc \ProcessP\subst{\vec\Value}{\vec\varY} \pc \ProcessQ'  \red^* \nullp
  \end{array}
 \]
Vice versa if $\ProcessR_1 \pc \ContextS[\selectp\ChannelL\selectL.(\ProcessP \pc \ProcessQ)] \pc \ProcessR_2$ is correct, then each reduction from \mbox{$\ProcessR_1 \pc \ContextS[\selectp\ChannelL\selectL.(\ProcessP \pc \ProcessQ)] \pc \ProcessR_2$} to $\nullp$ must be of the shape shown above, and the proof concludes similarly. 

\medskip

\noindent\rulename{sr-cond}
If  $\ProcessR_1\pc \ContextS[\condp\var{\outp{\rc}\true.\ProcessP}{\outp\rc\false.\ProcessP}] \pc \ProcessR_2$ is correct, then each reduction from 
 $\ProcessR_1\pc \ContextS[\condp\var{\outp{\rc}\true.\ProcessP}{\outp\rc\false.\ProcessP}] \pc \ProcessR_2$ to $\nullp$ must be of the shape
\[\begin{array}{l}\ProcessR_1\pc \ContextS[\condp\var{\outp{\rc}\true.\ProcessP}{\outp\rc\false.\ProcessP}] \pc \ProcessR_2\red^*\\
  \ProcessR'_1\pc \condp\Value{\outp{\rc}\true.\ProcessP\subst{\vec\Value}{\vec\varY}\subst\Value\var}{\outp\rc\false.\ProcessP\subst{\vec\Value}{\vec\varY}\subst\Value\var} \pc \ProcessR'_2\red^*\\ 
\ProcessR_1' \pc \outp\rc\Value. \ProcessP\subst{\vec\Value}{\vec\varY}\subst\Value\var \pc\ProcessR'_2\red^*  \ProcessR_1' \pc \outp\rc\Value. \ProcessP\subst{\vec\Value}{\vec\varY}\subst\Value\var \pc \inp\lc\varZ\bool.\ProcessR \red^*\\\ProcessR_1' \pc \ProcessP\subst{\vec\Value}{\vec\varY}\subst\Value\var \pc \ProcessR\subst\Value\varZ \red^* \nullp
\end{array}\]
where $\Value \in \bool$ since we start from a typed process and $\ProcessR'_2\red^* \inp\lc\varZ\bool.\ProcessR$. We get
\[
\begin{array}{l}
  \ProcessR_1\pc \ContextS[\outp\rc\var.\ProcessP] \pc \ProcessR_2\red^* 
  \ProcessR_1' \pc \outp\rc\Value.\ProcessP\subst{\vec\Value}{\vec\varY}\subst\Value\var \pc \ProcessR'_2\red^* \\
   \ProcessR_1' \pc \outp\rc\Value.\ProcessP\subst{\vec\Value}{\vec\varY}\subst\Value\var \pc \inp\lc\varZ\bool.\ProcessR \red
   \ProcessR_1' \pc \ProcessP\subst{\vec\Value}{\vec\varY}\subst\Value\var \pc \ProcessR\subst\Value\varZ \red^*\nullp.
\end{array}
\]
Vice versa, if  $\ProcessR_1\pc \ContextS[\outp\rc\var.\ProcessP] \pc \ProcessR_2$ is correct, then each reduction from $\ProcessR_1\pc \ContextS[\outp\rc\var.\ProcessP] \pc \ProcessR_2$ to $\nullp$ must be of the shape shown above with $\Value \in \bool$, 
and the proof 
is similar.
    \end{proof}

\begin{theorem}\label{b}
$\Adapter_ i\pc\AdapterB_ i\ared^*\id{\T_i}$ and $\AdapterB_ i\pc\Adapter_ i\ared^*\id{\SessionTypeS_i}$  for $1\leq i\leq 12$.  
\end{theorem}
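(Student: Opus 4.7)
The plan is to prove each of the $24$ symbolic reductions by direct computation, case by case on $i \in \{1,\ldots,12\}$. For every case we exhibit an explicit finite chain $\Adapter_i \pc \AdapterB_i \ared^* \id{\T_i}$ (and symmetrically for the other direction) using the rules of Table~\ref{tab:areduction}. No new induction is needed: the theorem is really a bookkeeping exercise against a fixed, finite table of adapters.

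The general recipe for a single case is uniform. First, apply the \rulename{sr-up} rules to hoist the leading prefixes of the left adapter (always an input or a branching on $\lc$, possibly wrapping a conditional) outside the parallel composition with the right adapter. Second, once the two components sit adjacent on the middle interface, discharge the internal handshakes between $\rc$ of $\Adapter_i$ and $\lc$ of $\AdapterB_i$ using \rulename{sr-comm 1}, \rulename{sr-comm 2}, \rulename{sr-choice 1}, \rulename{sr-choice 2}; the variable substitutions these produce are precisely the ones needed to match the identity. Third, use the (bidirectional) \rulename{sr-swap} rules to reorder the surviving inputs on $\lc$ and outputs on $\rc$ into the alternating pattern dictated by the definition of $\id{\T_i}$. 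Finally, collapse the residual $\id{\T}\pc\id{\T}$ with \rulename{sr-id} and peel off the next prefix.

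The genuinely delicate cases are a7 and a8, which are the only ones where the reduction crosses a conditional, and therefore the only ones that use \rulename{sr-cond}. I sketch a7 to show the pattern: starting from $\Adapter_7 \pc \AdapterB_7$, lift the outer input with \rulename{sr-up 1}, then lift the conditional over the composition with $\AdapterB_7$ by \rulename{sr-up 9}; inside each branch the remaining select/branch pair reduces by \rulename{sr-choice 1}, and \rulename{sr-up 4} then lifts the constant output of $\true$ or $\false$. The term has become $\inp\lc x\bool.\condp{x}{\outp\rc\true.(\id\T\pc\id\T)}{\outp\rc\false.(\id\T\pc\id\T)}$; applying \rulename{sr-id} in both branches and then \rulename{sr-cond} yields $\inp\lc x\bool.\outp\rc x.\id\T$, which is $\id{\Out\bool.\SessionType}$. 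The symmetric direction $\AdapterB_7 \pc \Adapter_7$ is analogous but uses \rulename{sr-up 5} and \rulename{sr-up 6} to push the branching on $\lc$ outward before firing the handshake.

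The main obstacle is simply the length and number of the computations for a11 and a12, where the nested branch/select structure forces several \rulename{sr-up} and \rulename{sr-swap} steps between each handshake. In those cases one must verify that the side condition $\var \neq \varY$ of \rulename{sr-swap 3} and the implicit freshness assumptions of the other swap rules are satisfied; this is immediate because the selection prefixes $\procL{}{}$ and $\procR{}{}$ bind no variables, so the swaps are always applicable. Beyond checking these side conditions, each of the twelve cases is routine.
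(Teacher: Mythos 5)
Your proposal follows essentially the same route as the paper: a direct case-by-case computation of each reduction chain using the \rulename{sr-up} rules to hoist prefixes, the \rulename{sr-comm}/\rulename{sr-choice} rules for the internal handshakes, the \rulename{sr-swap} rules to restore the alternating shape of the identity, and \rulename{sr-id} to collapse the residue, with \rulename{sr-cond} singled out for axioms \rulename{a7}/\rulename{a8} exactly as in the paper (your sketch of \rulename{a7} matches the paper's computation step for step). The only caveat is that your ``general recipe'' describes hoisting prefixes of the \emph{left} adapter on $\lc$, which fits the odd-indexed cases; for the even-indexed (dual) adapters the leading actions are on $\rc$ and must be hoisted from the \emph{right} component via \rulename{sr-up 2}, \rulename{sr-up 4}, \rulename{sr-up 6}, \rulename{sr-up 8} --- a symmetry you clearly intend but should state.
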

\begin{proof} The proof is by cases on $i$. For example\\
$\begin{array}{lcl}
\Adapter_ 1\pc\AdapterB_ 1& \ared^* & \inp{\lc}{x}{\TypeT}.\inp{\lc}{y}{\TypeS}. (\procout{\rc}{y}{}\procout{\rc}{x}{}\idA{\T}{\T}\pc\AdapterB_1) \\
                   & \ared^* & \inp{\lc}{x}{\TypeT}.\inp{\lc}{y}{\TypeS}. (\idA{\T}{\T}\pc\procout{\rc}{x}{}\procout{\rc}{y}{}\idA{\T}{\T})\\
                   & \ared^* & \inp{\lc}{x}{\TypeT}.\inp{\lc}{y}{\TypeS}. \procout{\rc}{x}{}\procout{\rc}{y}{}(\idA{\T}{\T}\pc\idA{\T}{\T})\ared^*\id{\Out\TypeT.\Out\TypeS.\SessionType}\\
 \Adapter_2 \pc \AdapterB_2 & \ared^* &  \inp{\rc}{x}{\TypeT}.\inp{\rc}{y}{\TypeS}.\big(\Adapter_2 \pc \procout{\lc}{y}{}\procout{\lc}{x}{}\idA{\T}{\T}) \\
                     & \ared^* & \inp{\rc}{x}{\TypeT}.\inp{\rc}{y}{\TypeS}.\big(\procout{\lc}{x}{}\procout{\lc}{y}{}\idA{\T}{\T} \pc \idA{\T}{\T}) \\
                     & \ared^* & \inp{\rc}{x}{\TypeT}.\inp{\rc}{y}{\TypeS}.\procout{\lc}{x}{}\procout{\lc}{y}{}(\idA{\T}{\T} \pc \idA{\T}{\T}) \ared^* \id{\In\TypeT.\In\TypeS.\SessionType}\\
   \Adapter_3 \pc \AdapterB_3 
& \ared^* &  \inp{\lc}{x}{\TypeT}.\procbr{\lc}{\procL{\rc}{}\procout{\rc}{x}{}\idA{\SessionTypeT}{\SessionTypeT}\pc \AdapterB_3}{\procR{\rc}{}\procout{\rc}{x}{}\idA{\SessionTypeS}{\SessionTypeS}\pc \AdapterB_3} \\
& \ared^* &  \inp{\lc}{x}{\TypeT}.{\lc}\triangleright\{\procout{\rc}{x}{}\idA{\SessionTypeT}{\SessionTypeT}\pc \inp{\lc}{\x}\TypeT.\procout{\rc}{\x}\procL{\rc}{\idA{\SessionTypeT}{\T}},\\
&&\phantom{\inp{\lc}{x}{\TypeT}.{\lc}\triangleright\{}\procout{\rc}{x}{}\idA{\SessionTypeS}{\SessionTypeS}  \pc  \inp{\lc}{\x}{\TypeT}.\procout{\rc}{\x}\procR{\rc}{\idA{\SessionTypeS}{\S}}\} \\
& \ared^* &  \inp{\lc}{x}{\TypeT}. \procout{\rc}{\x} \procbr{\lc}{\procL{\rc}{\idA{\SessionTypeT}{\T}}}{\procR{\rc}{\idA{\SessionTypeS}{\S}}}  = \id{\Out\TypeT.(\SessionTypeT \oplus \SessionTypeS)}
\end{array}\\
\begin{array}{lcl}
  \Adapter_4 \pc \AdapterB_4 & \ared^* & \inp{\rc}{x}{\TypeT}. ( \Adapter_4 \pc  
     \procbr{\rc}{\procL{\lc}{}\procout{\lc}{x}{}\idA{\SessionTypeT}{\SessionTypeT}}{\procR{\lc}{}\procout{\lc}{x}{}\idA{\SessionTypeS}{\SessionTypeT}} )\\
     & \ared^* & \inp{\rc}{x}{\TypeT}.{\rc}\triangleright \{
      \inp{\rc}{x}{\TypeT}.\procout{\lc}{x}{}\procL{\lc}{}\idA{\SessionTypeT}{\SessionTypeT}  
      \pc \procout{\lc}{x}{}\idA{\SessionTypeT}{\SessionTypeT},\\
      &&\phantom{\inp{\rc}{x}{\TypeT}.{\rc}\triangleright \{}\inp{\rc}{x}{\TypeT}.\procout{\lc}{x}{}\procR{\lc}{}\idA{\SessionTypeS}{\SessionTypeT} \pc \procout{\lc}{x}{}\idA{\SessionTypeS}{\SessionTypeT}\} \\ 
           & \ared^* & \inp{\rc}{x}{\TypeT}.\procbr{\rc}{
      \procout{\lc}{x}{}\procL{\lc}{}\idA{\SessionTypeT}{\SessionTypeT} 
       \pc  \idA{\SessionTypeT}{\SessionTypeT}}{\procout{\lc}{x}{}\procR{\lc}{}\idA{\SessionTypeS}{\SessionTypeT} \pc \idA{\SessionTypeS}{\SessionTypeT}} \\ 
                  & \ared^* & \inp{\rc}{x}{\TypeT}. \procout{\lc}{x}{} \procbr{\rc}{\procL{\lc}{}\idA{\SessionTypeT}{\SessionTypeT} 
       \pc  \idA{\SessionTypeT}{\SessionTypeT}}{\procR{\lc}{}\idA{\SessionTypeS}{\SessionTypeT} \pc \idA{\SessionTypeS}{\SessionTypeT}} \\ 
                         & \ared^* & \inp{\rc}{x}{\TypeT}. \procout{\lc}{x}{} \procbr{\rc}{
      \procL{\lc}{}(\idA{\SessionTypeT}{\SessionTypeT} 
       \pc  \idA{\SessionTypeT}{\SessionTypeT})}{\procR{\lc}{}(\idA{\SessionTypeS}{\SessionTypeT} \pc \idA{\SessionTypeS}{\SessionTypeT}}) \\ 
                & \ared^* & \id{\In\TypeT.(\SessionTypeT + \SessionTypeS)}
\end{array}\\
\begin{array}{lcl}
\Adapter_5 \pc \AdapterB_5 & \ared^* &  \inp{\lc}{x}{\unit}. \procout{\rc}{\Unit}{}(\idA{\T}{\T} \pc \idA{\T}{\T}) \ared^* \id{\Out\tunit.\SessionType}\\                           
\Adapter_6 \pc \AdapterB_6 & \ared^* &  \inp{\rc}{x}{\unit}. \procout{\lc}{\Unit}{} (\idA{\T}{\T}\pc \idA{\T}{\T}) \ared^* \id{\In\tunit.\SessionType} \\
\Adapter_ 7\pc\AdapterB_ 7 & \ared^* & \inp{\lc}{x}{\bool}.\cond{x}{(\procL{\rc}{\idA{\T}{\SessionTypeT}\pc\AdapterB_7)}}{(\procR{\rc}{\idA{\SessionTypeT}{\SessionTypeS}\pc\AdapterB_7})}\\
                   & \ared^* & \inp{\lc}{x}{\bool}.\cond{x}{({\idA{\T}{\SessionTypeT}\pc\outp\rc\true.\idA{\T}{}})}{({\idA{\SessionTypeT}{\SessionTypeS}\pc\outp\rc\false.\idA{\T}{}})}\\
                   & \ared^* & \inp{\lc}{x}{\bool}.\cond{x}{\outp\rc\true.\idA{\T}{}}{\outp\rc\false.\idA{\T}{}} \\
                   & \ared     & \inp{\lc}{\var}{\bool}.\outp\rc \var{}.\idA{\T}{}=\id{\Out\tbool.\SessionType}\\
\end{array}\\
\begin{array}{lcl}
  \Adapter_8 \pc  \AdapterB_8 & \ared^* &  \inp{\rc}{x}{\bool}. \big(\Adapter_8 \pc
                                          \cond{x}{\procL{\lc}{\idA{\SessionTypeT}{\T}}}{\procR{\lc}{\idA{\SessionTypeT}{\T}}}\big) \\ 
                   & \ared^* &   \inp{\rc}{x}{\bool}. \big( \cond{x}{(\Adapter_8 \pc \procL{\lc}{\idA{\SessionTypeT}{\T}})}{(\Adapter_8 \pc {\procR{\lc}{\idA{\SessionTypeT}{\T}}})}\big) \\ 
                   & \ared^* &   \inp{\rc}{x}{\bool}.\big( \cond{x}{\procout{\lc}{\true}{\idA{\T}{\SessionTypeT}}}{\procout{\lc}{\false}{\idA{\T}{\SessionTypeT}}}\big)  \\ 
                   & \ared^* &  \inp{\rc}{x}{\bool}. \procout{\lc}{\x}{\idA{\T}{\SessionTypeT}} = \id{\In\tbool.\SessionType}\\
\end{array}\\
\begin{array}{lcl}
  \Adapter_9  \pc \AdapterB_9 &  \ared^* &\procbr{\lc}{\procR{\rc}{\idA{\SessionTypeT}{\SessionTypeT}}\pc \AdapterB_9}{\procL{\rc}{\idA{\SessionTypeS}{\SessionTypeS}}\pc \AdapterB_9}  \\
                       &   \ared^*&\procbr{\lc}{\idA{\SessionTypeT}{\SessionTypeT}\pc \procL{\rc}{\idA{\SessionTypeT}{\S}}}{\idA{\SessionTypeS}{\SessionTypeS}\pc \procR{\rc}{\idA{\SessionTypeS}{\T}}} \\
                       &   \ared^*&\procbr{\lc}{\procL{\rc}{(\idA{\SessionTypeT}{\T}\pc\idA{\SessionTypeT}{\T})}}{\procR{\rc}{(\idA{\SessionTypeS}{\S}\pc \idA{\SessionTypeS}{\S})}} \ared^* \id{\SessionTypeT \oplus \SessionTypeS}\\
 \Adapter_{10}\pc \AdapterB_{10}  & \ared^* & \procbr{\rc}{\Adapter_{10} \pc \procR{\lc}{\idA}{\SessionTypeT}{\SessionTypeT}}{\Adapter_{10} \pc \procL{\lc}{\idA{\SessionTypeS}{\SessionTypeS}}}  \\
                           & \ared^* & \procbr{\rc}{\procL{\lc}{({\idA{\SessionTypeT}{\SessionTypeT}}\pc \idA{\SessionTypeT}{\T})}}{\procR{\lc}{(\idA{\SessionTypeS}{\SessionTypeS}\pc \idA{\SessionTypeS}{\S})}} \ared^* \id{\SessionTypeT + \SessionTypeS} 
%
\end{array}
$
  
  \end{proof}
  
  Point \ref{i} is a straightforward consequence of Theorems \ref{a} and \ref{b}.

\section{Concluding remarks}
\label{sec:conclusions}

Type isomorphisms have been mainly studied for various
$\lambda$-calculi~\cite{DiCosmo95}. P{\'e}rez et al. \cite{PCPT12} 
interpret intuitionistic linear logic propositions as session types for concurrent processes, which communicate only channels. 
So both their types and their processes differ from ours. In this scenario they explain how type isomorphisms resulting from linear logic equivalences are realized by coercions between interface types of session-based concurrent systems.

The notion of isomorphism for session types investigated in this paper
can be used for automatically adapting behaviors, when their
differences do not entail any loss of information.
Adaptation in general \cite{BCGLV12} is much more permissive than in
our approach, where we require adapters to be invertible. Moreover we
only adapt processes as in \cite{BDPZ12,DP13}, while other works like
\cite{AR12,DGGJL13,CDV14} deal with adaptation of whole
choreographies. 
Our approach shares many similarities
with~\cite{CastagnaGesbertPadovani09,Padovani10} where
\emph{contracts} (as opposed to session types) describe the behavior
of clients and Web services and filters/orchestrators mediate their
interaction. The theory of orchestrators in \cite{Padovani10} allows
not only permutations of subsequent inputs and subsequent outputs, but
also permutations between inputs and outputs if these have no causal
dependencies. The induced morphism is therefore coarser than our isomorphism,
but it may entail some loss of information.

There are some open problems left for future research. The obvious
ones are whether and how our theory extends to recursive and
higher-order session types. Also, we do not know yet whether the set
of axioms in Table~\ref{tab:axioms} is \emph{complete}. The point is
that in the case of arrow, product and sum types or of arrow, intersection,
union types, it is known that the set of isomorphisms is not finitely
axiomatizable \cite{apal:06,DDGT10,CDMZ13}. Despite the fact that
session types incorporate constructs that closely resemble product and
sum types, it may be the case that the particular structure of the
type language allows for a finite axiomatization. A natural question is to what extend our results are a 
consequence of the presence of just two channels in the process language,
or whether they would carry over to calculi with arbitrary channel names. A more interesting research direction is to 
consider this notion of session type isomorphism in
relation to the work on session types and
linear logic \cite{CP10,W12}. 

\bibliographystyle{eptcs}
\bibliography{main}


\end{document}